\title{Approximating Max-Cut on Bounded Degree Graphs: Tighter Analysis of the FKL Algorithm}
\author{Jun-Ting Hsieh \thanks{Carnegie Mellon University} \and Pravesh K. Kothari\footnotemark[1]}
\date{\today}
\begin{document}
\maketitle

\begin{abstract}
    In this note, we describe a $\alpha_{GW} + \wt{\Omega}(1/d^2)$-factor approximation algorithm for Max-Cut on weighted graphs of degree $\leq d$. Here, $\alpha_{GW}\approx 0.878$ is the worst-case approximation ratio of the Goemans-Williamson rounding for Max-Cut. This improves on previous results for unweighted graphs by Feige, Karpinski, and Langberg~\cite{FKP02} and Flor{\'e}n~\cite{Flo16}. Our guarantee is obtained by a tighter analysis of the solution obtained by applying a natural local improvement procedure to the Goemans-Williamson rounding of the basic SDP strengthened with triangle inequalities. 
\end{abstract}


\section{Introduction}
In 1994, Goemans and Williamson~\cite{GW95} described a polynomial time algorithm based on rounding the natural semidefinite relaxation for the Max-Cut problem to obtain an approximation ratio of $\alphaGW \approx 0.878$. Assuming the Unique Games Conjecture~\cite{Kho02}, this rounding algorithm is in fact worst-case optimal~\cite{KKMO04}.

When the underlying graph has bounded degree, the complexity landscape of Max-Cut is much less clear. In particular, the Goemans-Williamson (GW) rounding algorithm can be improved by an elementary local-search based post-processing step applied to the cut obtained by GW rounding of the SDP relaxation strengthened by adding triangle inequalities. The first such result was shown by Feige, Karpinski, and Langberg~\cite{FKP02} (FKL) to obtain an approximation ratio of $\alphaGW + \epsilon(d)$ for $\epsilon(d) = \Omega(1/d^4)$ for unweighted degree $d$ graphs. Flor{\'e}n~\cite{Flo16} later improved the FKL analysis to obtain $\epsilon(d)=\Omega(1/d^3)$ in the same setting.  

In this note, we give a tighter analysis of the local search scheme to show that $\epsilon(d) = \wt{\Omega}(1/d^2)$ and further extend the result to weighted instances of Max-2LIN.

\begin{theorem} \label{thm:main-theorem}
    There is a polynomial time algorithm that takes input a (weighted) Max-2LIN instance $\varphi$ on a graph with $n$ vertices and degree $\leq d$ and outputs an assignment that satisfies a number of constraints that is within an $\alphaGW + \Omega\Paren{\frac{1}{d^2 \log d}}$ of the optimum.
\end{theorem}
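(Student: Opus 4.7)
The plan is to analyze the algorithm that (i) solves the basic Max-2LIN SDP strengthened with all triangle inequalities $\pm\langle x_u, x_v\rangle \pm \langle x_v, x_w\rangle \pm \langle x_u, x_w\rangle \geq -1$; (ii) rounds via a Goemans-Williamson random hyperplane; and (iii) post-processes by independently flipping each vertex with probability $p = \Theta(1/d)$ whenever the flip strictly increases the objective. The analysis proceeds via a case split parameterized by a criticality threshold $\gamma = \wt\Theta(1/d)$. Call a constraint \emph{$\gamma$-critical} if $b_{uv}\langle x_u, x_v\rangle$ lies within $\gamma$ of the GW worst-case value $\cos\theta^* \approx -0.69$. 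A quadratic Taylor expansion of the GW ratio curve at $\cos\theta^*$ shows that rounding satisfies each non-critical constraint at a rate beating $\alphaGW$ times its SDP value by a multiplicative $\Omega(\gamma^2)$; so if non-critical constraints carry a constant fraction of the total weight, the basic GW rounding already gives the claimed improvement.

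In the remaining regime, where nearly all weight lies on critical constraints, the triangle inequalities become essential. For two critical neighbors $u_i, u_j$ of a common vertex $v$, combining $\langle x_v, x_{u_i}\rangle + \langle x_v, x_{u_j}\rangle + \langle x_{u_i}, x_{u_j}\rangle \geq -1$ with the first two terms $\approx \cos\theta^*$ (after absorbing signs $b_{uv} \in \{\pm 1\}$ into the vectors) forces $\langle x_{u_i}, x_{u_j}\rangle \geq -1 - 2\cos\theta^* - O(\gamma) \approx 0.38$. So the critical neighbors of each vertex lie in a geometric cluster in the SDP. The central quantitative claim is that for each vertex $v$ with $k_v$ critical neighbors, the flip improvement $\Delta_v^+ := \max\left(0,\, \#\text{uncut}_v - \#\text{cut}_v\right)$ (counting only critical neighbors) satisfies $\mathbb{E}[\Delta_v^+] = \wt\Omega(1)$ over the GW randomness. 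Summing over all $n$ vertices and choosing $p = \Theta(1/d)$ small enough to suppress interactions between flips at neighboring vertices gives an additive cut gain of $\wt\Omega(n/d)$, which is a $\wt\Omega(1/d^2)$ fraction of the $\Theta(nd)$ total edge weight.

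The main obstacle is establishing $\mathbb{E}[\Delta_v^+] = \wt\Omega(1)$. Decomposing $\langle g, x_{u_i}\rangle = \cos\theta^* G + \sin\theta^* W_i$ with $G := \langle g, x_v\rangle$ and $W_i \in x_v^\perp$, the plan is to condition on $|G| \leq 1/\sqrt{k_v}$ (an $\Omega(1/\sqrt{k_v})$-probability event) to zero out the Gaussian bias that otherwise pushes $\Delta_v$ deeply negative; in this regime $\sum_i \mathrm{sign}(W_i)$ should be anti-concentrated with magnitude $\Omega(\sqrt{k_v})$ with constant probability, yielding the claim. The pointwise triangle-inequality bound $\langle x_{u_i}, x_{u_j}\rangle \geq 0.38$ is not individually strong enough to drive the Gaussian tail calculation; the analysis must also invoke the PSD constraint on the full Gram matrix of the $x_{u_i}$'s, which forces the \emph{average} pair-correlation to approach $\cos^2\theta^* \approx 0.48$ and keeps the $W_i$'s approximately orthogonal on average. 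A further partition of vertices by their critical-edge degree (handling $k_v = O(\log d)$ separately) is likely responsible for the $\log d$ factor in the final bound.
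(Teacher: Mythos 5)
Your overall architecture matches the paper's: case split on constraints near the GW-critical angle, triangle inequalities to lower-bound the pairwise correlations of a vertex's neighbors, conditioning on $\langle g, x_v\rangle$ lying in a small window, and a local flipping step whose interactions must be controlled. However, your central quantitative claim --- $\E[\Delta_v^+] = \wt{\Omega}(1)$, obtained from ``$\sum_i \mathrm{sign}(W_i)$ should be anti-concentrated with magnitude $\Omega(\sqrt{k_v})$ with constant probability'' --- is exactly the step that is not justified, and it is a factor $\sqrt{k_v}$ stronger than what the available structure yields. The triangle inequalities plus PSD-ness of the Gram matrix give you (via Sheppard's lemma, which turns pairwise sign-agreement probabilities into $\arcsin$ of correlations, and the Schur product theorem applied to the Taylor expansion of $\arcsin$) a \emph{second-moment} lower bound: $\mathrm{Var}\bigl(\sum_j \mathrm{sign}(W_j)\bigr) = \wt{\Omega}(k)$. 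But a second moment alone only implies $\E\bigl[(\sum_j \mathrm{sign}(W_j))_+\bigr] \geq \E[X^2]/(2\max|X|) = \wt{\Omega}(1)$, not $\wt{\Omega}(\sqrt{k})$; upgrading to anti-concentration at the scale of the standard deviation would require controlling higher moments of this correlated sign-sum over all admissible correlation matrices, which you do not provide. The paper settles for the weaker $\wt{\Omega}(1)$-per-conditioning bound and compensates by not paying your factor-$d$ loss in the flipping step: it flips deterministically on a thin slab $S$ of width $\eps = \Theta(1/(d\sqrt{\log d}))$, defining $B_i$ and $C_i$ with a margin so they are guaranteed disjoint from $S$ and never flip, whereas your independent flipping with $p = \Theta(1/d)$ gives back a factor of $d$ that you can only recover if the unproven anti-concentration holds.

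There is a second, related problem with your parameters. Conditioning on $|G| \leq 1/\sqrt{k_v}$ introduces a bias: each neighbor is cut with probability $\tfrac12 + \Theta(|G|)$, so $\E[\#\mathrm{cut} - \#\mathrm{uncut}] = \Theta(k|G|)$, which at your window width is $\Theta(\sqrt{k})$ --- the \emph{same} order as the fluctuation you hope to exploit, and in fact larger once you account for the $\mathrm{polylog}(d)$ loss in the variance lower bound forced by the possible negative correlations ($\Sigma_{jk}$ can be as small as $\approx -0.18$ even with triangle inequalities). The window must be taken a polylog factor narrower than $1/\sqrt{k}$ (the paper takes it all the way down to $\Theta(1/(d\sqrt{\log d}))$ with a large constant precisely so the bias term $c\eps W_i$ is dominated by the gain). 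This parameter issue is fixable; the missing anti-concentration argument is the genuine gap, and without it your route does not reach $\alphaGW + \wt{\Omega}(1/d^2)$.
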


Our analysis of the algorithm differs from FKL's. As in FKL, the high-level plan is to show that when the ``edgewise" analysis of the Goemans-Williamson rounded solution is ``tight" (otherwise, we already obtain an improvement on the worst-case GW guarantee), the rounded solution is locally suboptimal. That is, there is a $c_d$-fraction of vertices, for some constant $c_d$ depending only on $d$, such that switching them to the other side should cut more of the edges to their neighbors and thus increase the cut by an edge. But flipping a vertex may kill the increase from others. So we may not be able to ``win" from every one of the candidates. The analysis then involves managing this dependency and lower bounding $c_d$. Specifically, FKL and Flor{\'e}n accomplish this by analyzing the chance that more than half of a vertex's neighbors lie on the same side of the partition generated by GW rounding. In such a plan, they have to handle vertices of odd and even degrees slightly differently and generalizations to weighted graphs seem to lose additional factors in the approximation ratio.

Instead of trying to gain only a single edge from a vertex flip, our analysis directly focuses on lower bounding the quantitative gain in the weight of the cut by such an operation. This strategy also allows us to easily generalize our results to Max-2LIN and to arbitrary weighted graphs.

There is still a wide gap between approximation algorithms and hardness for Max-Cut in the bounded degree setting. In particular, for every $\epsilon>0$, the best known hardness result~\cite{Tre01} rules out polynomial time algorithms with an approximation ratio $>\alpha_{GW} + 5/\sqrt{d}+\epsilon$ for graphs of large enough constant degree $d$. Improving the approximation ratio to get closer to this bound (this will likely need a scheme different from FKL's) or obtaining a better inapproximability results are outstanding open questions.
\section{The Algorithm and Analysis}

\begin{mdframed}
    \begin{algorithm}[Max-2LIN on bounded degree graphs]
      \label{alg:max-cut}\mbox{}
      \begin{description}
          \item[Given:]
              A graph $G = (V,E)$ on $n$ vertices, $m$ edges and maximum degree $d$, signs $b_{ij} \in \pmo$, and non-negative weights $w_{ij} > 0$ for each $\{i,j\}\in E$.
          \item[Operation:]\mbox{}
              \begin{enumerate}
                    \item Find unit vectors\footnote{Approximately solving an SDP in $\poly(n)$ time produces such vectors with value at most $2^{-\poly(n)}$ smaller.}  $v_1,\dots, v_n \in \R^n$ that maximize $\frac{1}{m}\sum_{\{i,j\} \in E} w_{ij} \frac{1}{2}(1 + b_{ij} \iprod{v_i,v_j})$, and, satisfy $\Norm{a_i v_i - a_j v_j}^2 + \Norm{a_j v_j - a_k v_k}^2 \geq \Norm{a_i v_i - a_k v_k}^2$ for every triple $\{i,j,k\}$ and $(a_i, a_j, a_k)\in \pmo^3$.
                    \item Sample $g \sim \calN(0, \Id_n)$\footnote{As in standard implementations of the GW rounding scheme, truncating Gaussian samples to rationals of $\poly(n)$-bits suffices to recover the stated guarantees up to a loss of an additive $2^{-\poly(n)}$ in the approximation ratio. We will omit a detailed discussion of issues of numerical precision in this note.} , and set $x_i = \sgn(\iprod{g, v_i}) \in \pmo$.
                    \label{step:hyperplane-rounding}

                    \item Set $\eps = \frac{1}{Cd \sqrt{\log d}}$ for some large enough constant $C>0$.
                    Define $S \coloneqq \{i\in V: \iprod{g, v_i} \in (-\eps,\eps)\}$ to be the \emph{candidate set} of suboptimal vertices.
                    For each $i\in S$, partition its neighbors $N(i)$ into 3 disjoint sets:
                    \begin{enumerate}[(a)]
                        \item $A_i = N(i) \cap S = \{j\in N(i): \iprod{g, v_j} \in (-\eps,\eps)\}$,
                        \item $B_i = \{j\in N(i): b_{ij} x_i \iprod{g, v_j} \leq -\eps\}$,
                        \item $C_i = \{j\in N(i): b_{ij} x_i \iprod{g, v_j} \geq \eps\}$.
                    \end{enumerate}
                    \label{step:partition-neighbors}

                    \item Output $x'$ obtained by flipping $x_i$ for every $i\in S$ such that $\sum_{j\in B_i} w_{ij} > \sum_{j\in A_i \cup C_i} w_{ij}$.
                  
                    \label{step:flip}
              \end{enumerate}
      \end{description}
    \end{algorithm}
\end{mdframed} 

\newpage
\paragraph{Analysis} We will appeal to the following three elementary facts. 

\begin{fact}[Gaussian Measure in the Core]\label{fact:g-in-eps}
    For any $\eps > 0$,
    \begin{equation*}
        \frac{\eps}{\sqrt{2\pi}} e^{-\eps^2/2} \leq \Pr_{g\sim \calN(0,1)}\Brac{g\in(0,\eps)}
        \leq \frac{\eps}{\sqrt{2\pi}} \mper
    \end{equation*}
\end{fact}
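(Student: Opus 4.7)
The plan is to start from the integral expression for the probability and sandwich the integrand by its values at the two endpoints of the interval $[0,\eps]$. Concretely, I would write
\[
    \Pr_{g\sim \calN(0,1)}\Brac{g \in (0,\eps)} = \int_0^\eps \frac{1}{\sqrt{2\pi}} e^{-t^2/2} \, dt \mcom
\]
and then use monotonicity of the Gaussian density on $[0,\infty)$.

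For the upper bound, I would use $e^{-t^2/2} \leq e^0 = 1$ for every $t \geq 0$, so the integrand is bounded above by $1/\sqrt{2\pi}$ and the integral is at most $\eps/\sqrt{2\pi}$. For the lower bound, since $t \mapsto e^{-t^2/2}$ is decreasing on $[0,\eps]$, the integrand is at least $e^{-\eps^2/2}/\sqrt{2\pi}$ throughout the interval, and the integral is at least $\eps \cdot e^{-\eps^2/2}/\sqrt{2\pi}$. Combining the two bounds yields the claimed inequalities.

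There is no real obstacle here; the entire argument is a one-line bound on a monotone integrand, which is precisely why the authors label it an elementary fact. The only thing worth double-checking is that the stated interval is $(0,\eps)$ rather than $(-\eps, \eps)$ (the latter would merely double both bounds by symmetry), so the sandwiching is applied on $[0,\eps]$ exactly as above.
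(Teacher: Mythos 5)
Your proof is correct: bounding the monotone integrand $e^{-t^2/2}/\sqrt{2\pi}$ on $[0,\eps]$ by its endpoint values and integrating is exactly the standard argument, and the paper states this as an elementary fact without proof, so there is nothing it does differently.
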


\begin{fact}[Sheppard's Lemma] \label{fact:g1-g2-positive}
    Let $g_1,g_2$ be standard Gaussian variables with covariance $\sigma = \E[g_1 g_2] \in [-1,1]$.
    Then, $\Pr[g_1 \geq 0 \wedge g_2 \geq 0] = \frac{1}{2} - \frac{1}{2\pi}\arccos(\sigma)$.
\end{fact}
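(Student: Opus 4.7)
The plan is to reduce the probability to a planar geometry computation via rotational invariance of the standard Gaussian measure on $\R^2$. First, I would represent the joint law of $(g_1,g_2)$ by picking unit vectors $u_1,u_2\in\R^2$ with $\iprod{u_1,u_2}=\sigma$ (possible since $\sigma\in[-1,1]$) and drawing a standard Gaussian $h\sim\calN(0,\Id_2)$. Setting $g_i := \iprod{h,u_i}$ gives jointly Gaussian variables with unit marginals and covariance $\iprod{u_1,u_2}=\sigma$, so this reproduces the correct joint distribution and it suffices to compute the probability in this representation.

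Next, I would translate the event. The set $\{g_1\geq 0\wedge g_2\geq 0\}$ corresponds to $h$ lying in the intersection of the two half-planes $\{\iprod{h,u_1}\geq 0\}$ and $\{\iprod{h,u_2}\geq 0\}$. Letting $\theta:=\arccos(\sigma)\in[0,\pi]$ denote the angle between $u_1$ and $u_2$, I would verify by choosing coordinates $u_1=(1,0)$ and $u_2=(\cos\theta,\sin\theta)$ that this intersection is a wedge (convex cone with vertex at the origin) of opening angle $\pi-\theta$. For instance, at $\theta=0$ the two half-planes coincide (angle $\pi$); at $\theta=\pi/2$ they form a quadrant (angle $\pi/2$); at $\theta=\pi$ they are opposite and the intersection has measure zero.

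Finally, because the density of $h\sim\calN(0,\Id_2)$ is rotationally invariant, the Gaussian measure of any wedge with vertex at the origin equals its opening angle divided by $2\pi$. Therefore
\[
\Pr[g_1\geq 0\wedge g_2\geq 0] \;=\; \frac{\pi-\theta}{2\pi} \;=\; \frac{1}{2} - \frac{1}{2\pi}\arccos(\sigma),
\]
as claimed. I do not anticipate any genuine obstacle: the only edge cases are $\sigma=\pm 1$, where the two half-planes coincide or are antipodal so that the $\R^2$ representation is degenerate, but in those cases the identity $g_2=\pm g_1$ almost surely gives $1/2$ and $0$ respectively, matching the formula. The work is really just assembling the change of variables, the elementary plane-geometry computation of the wedge angle, and the rotational symmetry of $\calN(0,\Id_2)$.
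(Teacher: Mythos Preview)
Your argument is correct and is the standard derivation of Sheppard's formula: represent the bivariate Gaussian as $(\iprod{h,u_1},\iprod{h,u_2})$ for $h\sim\calN(0,\Id_2)$ and unit vectors at angle $\theta=\arccos(\sigma)$, then use rotational invariance to identify the probability with the angular fraction $(\pi-\theta)/(2\pi)$ of the wedge $\{\iprod{h,u_1}\geq 0,\ \iprod{h,u_2}\geq 0\}$. The edge cases $\sigma=\pm 1$ are handled as you indicate.

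There is nothing to compare against in the paper: the statement is listed there as a ``Fact'' (Sheppard's Lemma) and is invoked without proof, so your write-up supplies what the paper omits rather than paralleling or diverging from an existing argument.
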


\begin{fact}[Corollary of Schur Product Theorem] \label{fact:hadamard-psd}
    Let $A \in \R^{n\times n}$ be positive semidefinite.
    For any $t \in \N$, the matrix $B\in \R^{n\times n}$ with entries $B_{ij} = A_{ij}^t$ is positive semidefinite.
\end{fact}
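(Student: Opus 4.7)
The plan is to prove this by induction on $t$, with the inductive step being the classical Schur product theorem: the Hadamard (entrywise) product of two PSD matrices is itself PSD. Once this lemma is available, writing $A^{\circ t} = A \circ A^{\circ (t-1)}$ closes the induction, with base case $t = 1$ trivial and $t = 0$ giving the all-ones matrix (rank one, hence PSD) if one wishes to include it.

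For the Schur product step, I would use the spectral decomposition. Since $A$ is PSD, write $A = \sum_k \lambda_k u_k u_k^\top$ with $\lambda_k \geq 0$, and similarly decompose the second PSD matrix as $C = \sum_\ell \mu_\ell v_\ell v_\ell^\top$. Computing entrywise,
\[
(A \circ C)_{ij} \;=\; \sum_{k,\ell} \lambda_k \mu_\ell (u_k)_i (v_\ell)_i (u_k)_j (v_\ell)_j \;=\; \sum_{k,\ell} \lambda_k \mu_\ell\, (u_k \circ v_\ell)_i (u_k \circ v_\ell)_j,
\]
so $A \circ C = \sum_{k,\ell} \lambda_k \mu_\ell\, (u_k \circ v_\ell)(u_k \circ v_\ell)^\top$ is a nonnegative combination of rank-one PSD matrices, hence PSD. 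Applying this with $C = A^{\circ (t-1)}$, which is PSD by the inductive hypothesis, completes the induction and shows that $B$ is PSD.

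An equally short alternative avoids explicit diagonalization by using Gram realizations: since $A$ and $C$ are PSD, write $A_{ij} = \langle x_i, x_j \rangle$ and $C_{ij} = \langle y_i, y_j \rangle$ for some vectors $x_i, y_i$. Then $A_{ij} C_{ij} = \langle x_i \otimes y_i,\, x_j \otimes y_j \rangle$, exhibiting $A \circ C$ as a Gram matrix, which is automatically PSD. Iterating with $y_i \leftarrow x_i^{\otimes (t-1)}$ directly gives $B_{ij} = \langle x_i^{\otimes t}, x_j^{\otimes t}\rangle$. There is no real obstacle here; the only thing to verify carefully is that the induction base and the scalar coefficients $\lambda_k \mu_\ell$ remain nonnegative, which is immediate from PSDness of $A$ and of $A^{\circ(t-1)}$.
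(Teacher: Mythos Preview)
Your argument is correct: both the spectral-decomposition proof and the Gram/tensor-product proof of the Schur product theorem are standard and valid, and the induction on $t$ is immediate once that lemma is in hand. The paper itself does not supply a proof of this fact --- it is stated as a known ``Corollary of Schur Product Theorem'' and used as a black box --- so there is no in-paper argument to compare against; your write-up simply fills in what the paper takes for granted.
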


Our analysis crucially relies on the following basic facts about the  arcsin function.
\begin{lemma}[Basic Facts about Taylor Approximation for $\arcsin$] \label{lem:arcsin-analysis}
    Let the Taylor expansion of the $\arcsin$ function be $\arcsin(x) = \sum_{k=0}^{\infty} c_k x^{2k+1}$ for $|x| \leq 1$ where $c_k = \frac{(2k)!}{2^{2k}(k!)^2(2k+1)}$.
    For any $\tau \in \N$, $\tau > \tau_0$ where $\tau_0$ is a universal constant,
    \begin{enumerate}[(1)]
        \item for $x > 0$, $\arcsin(x) \geq \sum_{k=0}^{\tau} c_k x^{2k+1}$,
        \label{item:arcsin-1}

        \item for $|x| \leq 1/2$, $\arcsin(x) \geq \sum_{k=0}^{\tau} c_k x^{2k+1} - O(\tau^{-1/2} \cdot 2^{-2\tau})$,
        \label{item:arcsin-2}

        \item for $x = 1$, $\arcsin(1) = \frac{\pi}{2} = \sum_{k=0}^{\tau} c_k + \Theta(\tau^{-1/2})$.
        \label{item:arcsin-3}
    \end{enumerate}
\end{lemma}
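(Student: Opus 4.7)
The three claims all follow from the same underlying observation: the coefficients $c_k$ decay like $k^{-3/2}$. I would first establish this asymptotic via Stirling's formula. Writing $c_k = \binom{2k}{k}/(4^k(2k+1))$ and using $\binom{2k}{k} = \frac{4^k}{\sqrt{\pi k}}(1 + O(1/k))$, this gives
\begin{equation*}
    c_k = \frac{1}{2\sqrt{\pi}\, k^{3/2}}\bigl(1 + O(1/k)\bigr),
\end{equation*}
so in particular every $c_k$ is positive, $c_k = \Theta(k^{-3/2})$, and the series $\sum c_k$ converges absolutely.

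Part (1) is then immediate: for $x \in (0,1]$ every term of the Taylor series $\arcsin(x) = \sum_{k \geq 0} c_k x^{2k+1}$ is positive, so any partial sum is a lower bound. For Part (2), the inequality is trivially true for $x \in [0, 1/2]$ by Part (1), so I would focus on $x \in [-1/2, 0)$, using oddness of $\arcsin$ and of every term $c_k x^{2k+1}$ to reduce to upper bounding the tail $\sum_{k > \tau} c_k y^{2k+1}$ at $y = -x \in (0, 1/2]$. Using $c_k = O(k^{-3/2})$, this tail is at most
\begin{equation*}
    \sum_{k > \tau} c_k (1/2)^{2k+1} = O\!\left(\sum_{k > \tau} k^{-3/2} \cdot 4^{-k}\right) = O(\tau^{-3/2} \cdot 4^{-\tau}),
\end{equation*}
which is comfortably absorbed into the stated error $O(\tau^{-1/2} \cdot 2^{-2\tau})$.

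For Part (3), I would first note that absolute convergence of $\sum c_k$ together with Abel's theorem and left-continuity of $\arcsin$ at $1$ gives $\sum_{k \geq 0} c_k = \arcsin(1) = \pi/2$. To pin down the tail size, I would apply the Stirling asymptotic for $c_k$ and integral comparison:
\begin{equation*}
    \sum_{k > \tau} c_k = \frac{1}{2\sqrt{\pi}} \sum_{k > \tau} k^{-3/2}\bigl(1 + O(1/k)\bigr) = \frac{1}{\sqrt{\pi \tau}} + O(\tau^{-3/2}),
\end{equation*}
which is $\Theta(\tau^{-1/2})$ for $\tau$ larger than some universal constant $\tau_0$.

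There is no serious obstacle in this lemma; the only nontrivial ingredient is the Stirling estimate for $c_k$, after which each of the three parts is a routine tail computation. The only mild care needed is to track the constants in the Stirling error with enough precision that the asymptotic $c_k \sim \frac{1}{2\sqrt{\pi}}\, k^{-3/2}$ yields matching $\Theta(\tau^{-1/2})$ upper and lower bounds in Part (3).
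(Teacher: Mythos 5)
Your proposal is correct and follows essentially the same route as the paper: establish $c_k = \Theta(k^{-3/2})$ via Stirling, deduce $\sum_{k>\tau} c_k = \Theta(\tau^{-1/2})$ by integral comparison, and then read off all three parts from positivity of the coefficients and the tail estimate. The only cosmetic difference is in Part (2), where you split by the sign of $x$ and use the geometric decay to get the slightly sharper tail bound $O(\tau^{-3/2} 4^{-\tau})$, whereas the paper bounds the absolute tail uniformly by $2^{-2\tau}\sum_{k>\tau} c_k$; both are valid and yield the stated error.
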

\begin{proof}
    We first show that $c_k = \Theta(k^{-3/2})$ by Stirling's approximation:
    \begin{equation*}
        c_k = (1+O(1/k)) \cdot \frac{\sqrt{4\pi k} (2k/e)^{2k}}{2^{2k} \cdot 2\pi k (k/e)^{2k} \cdot (2k+1)} = \frac{1}{2\sqrt{\pi}} k^{-3/2} (1+O(1/k)) \mper
    \end{equation*}
    Therefore, for any $\tau > \tau_0$ where $\tau_0$ is a universal constant,
    \begin{equation*}
        \sum_{k=\tau}^\infty c_k = \Theta(1) \sum_{k=\tau}^\infty k^{-3/2}
        = \Theta(1) \int_{\tau}^\infty x^{-3/2}\ dx
        = \Theta(\tau^{-1/2}) \mper
    \end{equation*}
    Now we prove the lemma.
    \ref{item:arcsin-1} is straightforward because $c_k > 0$ for all $k$.
    \ref{item:arcsin-2} holds since
    \begin{equation*}
        \Abs{\sum_{k=\tau+1}^\infty c_k x^{2k+1} }
        \leq \sum_{k=\tau+1}^\infty c_k |x|^{2k+1}
        \leq 2^{-2\tau}\sum_{k=\tau+1}^{\infty} c_k
        \leq O(\tau^{-1/2} \cdot 2^{-2\tau}) \mper
    \end{equation*}
    Finally, \ref{item:arcsin-3} follows directly from $\sum_{k=\tau+1}^\infty c_k = \Theta(\tau^{-1/2})$.
\end{proof}

\pref{lem:arcsin-analysis} states that for some large threshold $\tau$, the Taylor approximation error for $|x| \leq 1/2$ is a factor $2^{-2\tau}$ smaller than the error for $x=1$.
This gap allows us to prove the key lemma below:

\begin{lemma} \label{lem:sum-of-arcsin}
    Let $d\in \N$, $d\geq 2$.
    Let $A \in \R^{d\times d}$ be a positive semidefinite matrix such that $A_{ii} = 1$ and $A_{ij} \geq -1/2$ for all $i, j\in [d]$.
    Let $w \in \R^d_{\geq 0}$ be a vector with non-negative entries.
    Then,
    \begin{equation*}
        \sum_{i, j=1}^d w_i w_j \arcsin(A_{ij}) \geq \Omega\Paren{\frac{\|w\|_1^2}{d\sqrt{\log d}}} \mper
    \end{equation*}
\end{lemma}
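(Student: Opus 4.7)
The plan is to lower bound $\sum_{i,j} w_i w_j \arcsin(A_{ij})$ by expanding $\arcsin$ in its Taylor series truncated at degree $2\tau+1$, with $\tau = \Theta(\log d)$, extracting a positive contribution from the unit diagonal of $A$, and controlling the off-diagonal entries via Fact~\ref{fact:hadamard-psd} together with the hypothesis $A_{ij} \geq -1/2$. Write $T_\tau(x) := \sum_{k=0}^\tau c_k x^{2k+1}$ for the truncated series.

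First, I would apply Lemma~\ref{lem:arcsin-analysis} entrywise. For the diagonal ($A_{ii}=1$), item \ref{item:arcsin-3} gives $\arcsin(1) - T_\tau(1) = \Theta(\tau^{-1/2})$. For off-diagonal entries with $A_{ij} \in [0,1]$, item \ref{item:arcsin-1} gives $\arcsin(A_{ij}) - T_\tau(A_{ij}) \geq 0$. For off-diagonal entries with $A_{ij} \in [-1/2, 0)$, the assumption $A_{ij} \geq -1/2$ becomes active and item \ref{item:arcsin-2} gives $\arcsin(A_{ij}) - T_\tau(A_{ij}) \geq -O(\tau^{-1/2} 2^{-2\tau})$. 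Multiplying by $w_i w_j \geq 0$ and summing yields
\[
\sum_{i,j} w_i w_j \arcsin(A_{ij}) \; \geq \; \sum_{i,j} w_i w_j T_\tau(A_{ij}) \; + \; \Theta(\tau^{-1/2}) \sum_i w_i^2 \; - \; O(\tau^{-1/2} 2^{-2\tau}) \, \|w\|_1^2 .
\]

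The crux is that the ``main term'' $\sum_{i,j} w_i w_j T_\tau(A_{ij})$ is non-negative: since $A \succeq 0$, Fact~\ref{fact:hadamard-psd} implies each Hadamard power $A^{\odot(2k+1)}$ is PSD, so $w^\top A^{\odot(2k+1)} w \geq 0$, and since the coefficients $c_k$ are all positive the combination is $\geq 0$. This is the single place the PSD hypothesis on $A$ enters. Next, Cauchy--Schwarz gives $\sum_i w_i^2 \geq \|w\|_1^2 / d$, so the diagonal contributes at least $\Omega(\tau^{-1/2} \|w\|_1^2 / d)$. Choosing $\tau = C \log d$ with a large enough constant $C$ makes $2^{-2\tau} \ll 1/d$, rendering the off-diagonal correction negligible relative to the diagonal gain, and sets $\tau^{-1/2} = \Theta(1/\sqrt{\log d})$, which produces the claimed bound $\Omega(\|w\|_1^2/(d \sqrt{\log d}))$.

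The main obstacle I anticipate is precisely the balancing of $\tau$: if $\tau$ is too small, the off-diagonal error $2^{-2\tau}$ is not small enough to be absorbed by the $1/d$ Cauchy--Schwarz loss on the diagonal; if $\tau$ is too large, the diagonal gain $\tau^{-1/2}$ degrades. The logarithmic sweet spot is essentially forced by this approach, which is exactly what produces the $\sqrt{\log d}$ factor that propagates into the approximation guarantee of the main theorem.
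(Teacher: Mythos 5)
Your proof is correct and follows essentially the same route as the paper: truncate the $\arcsin$ Taylor series at $\tau = \Theta(\log d)$, use the Schur product theorem to discard the nonnegative main term, extract the $\Theta(\tau^{-1/2})\sum_i w_i^2$ gain from the diagonal, and balance the $2^{-2\tau}$ off-diagonal error against the $1/d$ Cauchy--Schwarz loss. No gaps.
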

\begin{proof}
    Pick a threshold $\tau = C \log_2 d$ for a large enough constant $C$.
    Since we have the assumption that $A_{ij} \geq -1/2$ and $w_i \geq 0$ for all $i,j$, we can bound the off-diagonal entries using \ref{item:arcsin-1} and \ref{item:arcsin-2} of \pref{lem:arcsin-analysis},
    \begin{equation*}
    \begin{aligned}
        \sum_{i \neq j}^d w_i w_j \arcsin(A_{ij})
        &\geq \sum_{i\neq j}^d w_i w_j \Paren{ \sum_{k=0}^{\tau} c_{k} A_{ij}^{2k+1} - O(\tau^{-1/2}2^{-2\tau}) } \\
        &= \sum_{k=0}^{\tau} c_{k} \sum_{i \neq j}^d w_i w_j A_{ij}^{2k+1} - \wt{O}\Paren{\frac{\|w\|_1^2}{d^{2C}}} \mper
    \end{aligned}
    \end{equation*}
    The diagonal entries are $\arcsin(1) = \sum_{k=0}^\tau c_k + \Theta(\tau^{-1/2})$ by \ref{item:arcsin-3} of \pref{lem:arcsin-analysis}.
    Thus, we get
    \begin{equation*}
    \begin{aligned}
        \sum_{i, j=1}^d w_i w_j \arcsin(A_{ij})
        &\geq \sum_{k=0}^\tau c_k \sum_{i,j=1}^d w_i w_j A_{ij}^{2k+1} + \sum_{i=1}^d w_i^2 \cdot \Omega(\tau^{-1/2}) - \wt{O}\Paren{\frac{\|w\|_1^2}{d^{2C}}} \\
        &= \sum_{k=0}^\tau c_k \sum_{i,j=1}^d w_i w_j A_{ij}^{2k+1}  + \Omega\Paren{\frac{\|w\|_2^2}{\sqrt{\log d}}} - \wt{O}\Paren{\frac{\|w\|_1^2}{d^{2C}}} \mper
    \end{aligned}
    \end{equation*}
    Since $A \succeq 0$, by \pref{fact:hadamard-psd} we know that $\sum_{ij=1}^d w_i w_j A_{ij}^{2k+1} \geq 0$ for all $k \geq 0$.
    Finally, any $w\in \R^d$ satisfies $\|w\|_2^2 \geq \frac{1}{d}\|w\|_1^2$.
    This completes the proof.
\end{proof}

\subsection{Proof of \pref{thm:main-theorem}}

Let $\rho_* = \argmin_{\rho\in[-1,1]} \frac{1+ \frac{2}{\pi}\arcsin(\rho)}{1+\rho} \approx 0.689$ such that $\frac{1+ \frac{2}{\pi}\arcsin(\rho_*)}{1+\rho_*} = \alphaGW$.
Following \cite{FKP02}, we can assume without loss of generality that for all $(i,j)\in E$ with sign $b_{ij}$, the SDP solution satisfies $b_{ij} \iprod{v_i,v_j} \in [\rho_* - 0.01, \rho_* + 0.01]$.

Observe that after \pref{step:hyperplane-rounding} of \pref{alg:max-cut}, for every $i$ in the candidate set $S$, by definition all edges between $i$ and $B_i$ are violated, while all edges between $i$ and $C_i$ are satisfied.
Moreover, it is important that $B_i$ and $C_i$ are \emph{disjoint} from $S$, so their assignments will not be flipped in \pref{step:flip}.
For edges between $i$ and $A_i$, in the worst case all of them are violated after flipping.
Thus, if $\sum_{j\in B_i} w_{ij} > \sum_{j\in A_i \cup C_i} w_{ij}$, then flipping $x_i$ will increase the Max-2LIN value.


We will prove that the expected gains from such local updates are large.
For a vertex $i\in S$, let $W_i \coloneqq \sum_{j\in N(i)} w_{ij}$ be the total weight of edges incident to $i$.
Define the \emph{local gain} from $i$ to be
\begin{equation*}
    \Delta_i \coloneqq \Paren{\sum_{j\in B_i} w_{ij} - \sum_{j\in A_i \cup C_i} w_{ij}}_+ 
    = \Paren{2\sum_{j\in B_i} w_{ij} - W_i}_+ \mcom
\end{equation*}
where we denote $(z)_+ = \max(0, z)$.
The following is the key lemma of our analysis showing that conditioned on $i\in S$, the expected local gain from vertex $i$ is at least $\wt{\Omega}(\frac{W_i}{d})$.

\begin{lemma}[Expected local gain] \label{lem:local-gain}
    Let $d\in \N, d\geq 2$ and $\eps = \frac{1}{C d \sqrt{\log d}}$ for a large enough constant $C$.
    For a vertex $i\in V$ with degree $d$, the expected local gain $\E[\Delta_i| i\in S] \geq \Omega\Paren{\frac{W_i}{d\sqrt{\log d}}}$.
\end{lemma}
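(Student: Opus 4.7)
The plan is to decompose $\Delta_i = (Y - R)_+$ with $R \coloneqq \sum_{j \in A_i} w_{ij}$ and $Y \coloneqq \sum_{j \in B_i} w_{ij} - \sum_{j \in C_i} w_{ij} = \sum_j w_{ij} Z_j$, where $Z_j \coloneqq \mathbf{1}[j \in B_i] - \mathbf{1}[j \in C_i]$. Since $(Y - R)_+ \geq Y_+ - R = \tfrac{1}{2}(|Y| + Y) - R$ pointwise, I will reduce the lemma to showing
\begin{enumerate}
    \item[(i)] $\E[|Y| \mid i \in S] = \Omega(W_i/(d\sqrt{\log d}))$, and
    \item[(ii)] both $|\E[Y \mid i \in S]|$ and $\E[R \mid i \in S]$ are $O(\eps W_i)$.
\end{enumerate}
For $\eps = 1/(Cd\sqrt{\log d})$ with $C$ large, the contributions from (ii) become a small fraction of the main term from (i), closing the argument.

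The heart is part (i). Since $|Y| \leq W_i$ pointwise, $\E[|Y| \mid i \in S] \geq \E[Y^2 \mid i \in S]/W_i$, so it suffices to lower bound the conditional second moment. The key observation is that whenever $|G_j|, |G_k| \geq \eps$ (writing $G_\ell \coloneqq \iprod{g, v_\ell}$), one has $Z_j Z_k = b_{ij} b_{ik} x_j x_k$ using $x_i^2 = 1$. Conditional on $G_i = 0$, the vector $(G_\ell)_{\ell \in N(i)}$ is a centered Gaussian with covariances $\rho_{jk} - \rho_{ij}\rho_{ik}$ (writing $\rho_{ab} \coloneqq \iprod{v_a, v_b}$), so Sheppard's lemma (\pref{fact:g1-g2-positive}) gives $\E[x_j x_k \mid G_i = 0] = \tfrac{2}{\pi}\arcsin(\tilde\rho_{jk})$ for the conditional correlations $\tilde\rho_{jk} \coloneqq (\rho_{jk} - \rho_{ij}\rho_{ik})/\sqrt{(1-\rho_{ij}^2)(1-\rho_{ik}^2)}$. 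Tracking the $O(\eps)$ corrections from $\eps$-thresholding and from averaging $G_i$ over $(-\eps, \eps)$ instead of fixing $G_i = 0$ yields $\E[Y^2 \mid i \in S] = \tfrac{2}{\pi}\sum_{j,k} w_{ij} w_{ik}\arcsin(\tilde A_{jk}) \pm O(\eps W_i^2)$ with $\tilde A_{jk} \coloneqq b_{ij} b_{ik} \tilde\rho_{jk}$.

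I then invoke \pref{lem:sum-of-arcsin} on $\tilde A$. It has unit diagonal and is PSD since $\tilde A_{jk} = \iprod{u_j, u_k}$ for $u_j \coloneqq b_{ij}(v_j - \rho_{ij} v_i)/\|v_j - \rho_{ij} v_i\|$. For the $\geq -1/2$ lower bound, I apply the SDP triangle inequality at $(i, j, k)$ with signs $(1, b_{ij}, -b_{ik})$ to get $b_{ij} b_{ik}\rho_{jk} \geq b_{ij}\rho_{ij} + b_{ik}\rho_{ik} - 1 \geq 2\rho_* - 1 - 0.02$, using the WLOG assumption $b_{ij}\rho_{ij}, b_{ik}\rho_{ik} \in [\rho_* - 0.01, \rho_* + 0.01]$; substituting back in the definition of $\tilde\rho_{jk}$ yields $\tilde A_{jk} \geq -(1-\rho_*)/(1+\rho_*) - O(0.01) > -1/2$. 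The lemma then delivers $\tfrac{2}{\pi}\sum_{j,k} w_{ij} w_{ik}\arcsin(\tilde A_{jk}) \geq \Omega(W_i^2/(d\sqrt{\log d}))$, and the $O(\eps W_i^2)$ correction is dominated for $C$ large, proving (i). Part (ii) reduces to direct Gaussian integrations: the conditional density of $G_j$ given $G_i = s$ for $|s| < \eps$ is bounded, so $\Pr[|G_j| < \eps \mid i \in S] = O(\eps)$; and a first-order expansion of the conditional Gaussian CDF in the mean shift $\rho_{ij} s$ gives $|\Pr[j \in B_i \mid i \in S] - \Pr[j \in C_i \mid i \in S]| = O(\eps)$.

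The main obstacle is recognizing that the relevant PSD matrix inside $\arcsin$ is the \emph{conditional} correlation matrix $\tilde A$ rather than the raw Gram matrix $A_{jk} = b_{ij} b_{ik} \rho_{jk}$, and verifying that $\tilde A$ still satisfies the $\geq -1/2$ hypothesis by combining the SDP triangle inequality with the near-tightness $b_{ij}\rho_{ij} \approx \rho_*$. The secondary bookkeeping challenge is ensuring that all the $O(\eps)$ errors---from $\eps$-thresholding, from averaging $G_i$ over $(-\eps, \eps)$, and from the conditional mean shifts entering (ii)---are cleanly absorbed by taking $C$ sufficiently large in $\eps = 1/(Cd\sqrt{\log d})$.
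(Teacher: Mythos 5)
Your proposal is correct and follows essentially the same route as the paper's proof: a second‑moment lower bound on a (near‑)centered weighted sum of neighbor indicators via Sheppard's lemma, an application of \pref{lem:sum-of-arcsin} to the projected/conditional correlation matrix $\Sigma_{jk}=b_{ij}b_{ik}\langle \wh{v}_j,\wh{v}_k\rangle=\tilde A_{jk}$ with the $\geq -1/2$ hypothesis verified through the triangle inequalities exactly as you do, and a conversion from the second moment to $\E[Y_+]$ using the pointwise bound $|Y|\leq W_i$. The only differences are bookkeeping: the paper centers via $2\sum_{j\in B_i}w_{ij}-W_i$ and sidesteps the conditioning on $i\in S$ by thresholding the component $h_j$ of $g$ orthogonal to $v_i$ at $-3\eps$ (which is independent of $\langle g,v_i\rangle$), rather than tracking the $O(\eps)$ corrections to the conditional law and the separate $A_i$ term as you do.
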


\begin{proof}
    Consider vertex $i\in S$ and its $d$ neighbors, denoted $[d]$.
    We first introduce some notations for the analysis.
    \begin{itemize}
        \item We can assume that $v_i = (1,0,\dots,0)$ without loss of generality due to rotational symmetry.
        
        \item For every neighbor $j\in [d]$, let $v_j = (\rho_j, v_j')$ where the first coordinate is $\rho_j \in b_{ij} \cdot [\rho_* \pm 0.01]$ and $v_j' \in \R^{n-1}$ since we assume that $b_{ij} \iprod{v_i, v_j} \approx \rho_*$ for all $(i,j)\in E$.

        \item Denote the unit vector $\wh{v}_j = v_j'/\|v_j'\|_2$.

        \item Let $g = (g_1,\dots,g_n) \sim \calN(0, \Id_n)$ be the sampled Gaussian vector, and let $g' =(g_2,\dots,g_n)$.
        
        \item Let $h_j \seteq b_{ij} x_i \iprod{\wh{v}_j, g'}$.
        The random vector $h = (h_1,\dots,h_d)$ is a multivariate Gaussian variable with covariance matrix $\Sigma \in \R^{d\times d}$ where $\Sigma_{jj} = 1$ and $\Sigma_{jk} = b_{ij} b_{ik} \iprod{\wh{v}_j, \wh{v}_k}$.

    \end{itemize}

    \noindent
    Since $\|v_j\|_2 = 1$, we have $\|v_j'\|_2 = \sqrt{1-\rho_j^2} \in [0.726 \pm 0.01]$, and $\iprod{v_j, g} = \rho_j g_1 + \sqrt{1-\rho_j^2} \iprod{\wh{v}_j,g'}$.
    Recall that $S = \{i\in V: \iprod{g, v_i} \in (-\eps,\eps)\}$.
    Thus, $i\in S$ means that $\iprod{v_i, g} = g_1 \in (-\eps,\eps)$.
    Next, we define the following random variable
    \begin{equation*}
        Z \coloneqq \sum_{j=1}^d w_{ij} \cdot \1\Paren{h_j \leq -3\eps } \mper
    \end{equation*}
    Conditioned on the event that $|g_1| < \eps$,
    \begin{equation*}
        h_j \leq -3\eps\
        \Longrightarrow\ b_{ij} x_i \iprod{v_j, g} \leq |\rho_j g_1| + \sqrt{1-\rho_j^2} \cdot h_j \leq -\eps \mper
    \end{equation*}
    Therefore, we have $Z \leq \sum_{j\in B_i} w_{ij}$
    (recall that $B_i = \{j\in N(i): b_{ij} x_i \iprod{v_j,g} \leq -\eps\}$).
    Then,
    \begin{equation*}
        \Delta_i = \Paren{2\sum_{j\in B_i} w_{ij} - W_i}_+
        \geq (2Z- W_i)_+ \mper
    \end{equation*}
    Thus, it suffices to lower bound $\E[(2Z-W_i)_+]$.

    First, every $h_j$ is a standard Gaussian, so let $p \seteq \Pr[h_j \leq -3\eps] = \frac{1}{2}- c\eps$ for some $c \leq \frac{3}{\sqrt{2\pi}}$ by \pref{fact:g-in-eps}.
    Then, $\E[Z] = pW_i$.
    Next, we lower bound $\E[(Z-W_i/2)^2]$.

    \begin{equation*} \label{eq:Z-W2-squared}
    \begin{aligned}
        \E\Brac{(Z-W_i/2)^2} &= \E \Paren{\sum_{j=1}^d w_{ij} \Paren{\1(h_j\leq -3\eps) - \frac{1}{2}} }^2 \\
        &= \frac{1}{4} \sum_{j=1}^d w_{ij}^2 + \sum_{j \neq k}^d w_{ij} w_{ik} \Paren{ \Pr\Brac{h_j \leq -3\eps \wedge h_k \leq -3\eps} - p + \frac{1}{4} } \\
        &\geq \frac{1}{4} \sum_{j=1}^d w_{ij}^2 + \sum_{j\neq k}^d w_{ij} w_{ik} \Paren{\frac{1}{2} - \frac{1}{2\pi} \arccos(\Sigma_{jk}) - 2c\eps - \Paren{\frac{1}{4} - c\eps} } \\
        &\geq \frac{1}{2\pi} \sum_{j,k = 1}^d w_{ij} w_{ik} \arcsin(\Sigma_{jk}) - c\eps W_i^2 \mper
    \end{aligned}
    \numberthis
    \end{equation*}
    The third line follows from $\Pr[h_j \leq -3\eps \wedge h_k \leq -3\eps] \geq \Pr\Brac{h_j \leq 0 \wedge h_k \leq 0} - 2\cdot \Pr[-3\eps \leq h_j  \leq 0]$ and applying \pref{fact:g1-g2-positive}.
    The final inequality is because $\frac{\pi}{2}- \arccos(\theta) = \arcsin(\theta)$ and $\arcsin(\Sigma_{jj}) = \arcsin(1) = \frac{\pi}{2}$.

    By the triangle inequality of the SDP solution, for any $j \neq k$,
    $\|v_i - b_{ij} v_j\|^2 + \|v_i - b_{ik} v_k\|^2 \geq \|b_{ij} v_j - b_{ik} v_k\|^2$.
    Expanding this, we get
    \begin{equation*}
        b_{ij} b_{ik} \Paren{\rho_j \rho_k + \sqrt{1-\rho_j^2}\sqrt{1-\rho_k^2} \iprod{\wh{v}_j, \wh{v}_k}} \geq b_{ij} \rho_j + b_{ik} \rho_k - 1 \mper
    \end{equation*}
    Since $b_{ij} \rho_j \in [\rho_* \pm 0.01]$ for all $j\in[d]$, we have
    \begin{equation*}
        \Sigma_{jk} = b_{ij} b_{ik} \iprod{\wh{v}_j, \wh{v}_k} \geq -0.2 \mper
    \end{equation*}
    This is crucial since we can now apply \pref{lem:sum-of-arcsin} to \pref{eq:Z-W2-squared} and get
    \begin{equation*}
        \E\Brac{(Z-W_i/2)^2} \geq \Omega\Paren{\frac{W_i^2}{d\sqrt{\log d}}} - O(\eps W_i^2) \mper
    \end{equation*}

    Finally, we lower bound $\E[(2Z - W_i)_+]$.
    Let $\ol{Z} = Z - W_i/2$, and let $\ol{Z}_+ = \max(0, \ol{Z})$ and $\ol{Z}_- = \max(0, -\ol{Z})$.
    Thus, $\ol{Z} = \ol{Z}_+ - \ol{Z}_-$ and $\ol{Z}^2 = \ol{Z}_+^2 + \ol{Z}_-^2$ by definition, and both $\ol{Z}_+$ and $\ol{Z}_-$ lie in $[0, W_i/2]$.
    Furthermore, $\E[\ol{Z}] = \E[\ol{Z}_+] - \E[\ol{Z}_-] = pW_i - W_i/2 = -c \eps W_i$.
    Then,
    \begin{equation*}
    \begin{aligned}
        \E \Brac{\ol{Z}^2} &= \E\Brac{\ol{Z}_+^2 + \ol{Z}_-^2} 
        \leq \frac{W_i}{2} \cdot \E\Brac{\ol{Z}_+ + \ol{Z}_-}
        = W_i \cdot \E[\ol{Z}_+] + \frac{c}{2} \eps W_i^2 \mper
    \end{aligned}
    \end{equation*}
    Setting $\eps \leq \frac{1}{C d\sqrt{\log d}}$ for a large enough $C$, we have $\E[\ol{Z}_+] \geq \Omega\Paren{\frac{W_i}{d\sqrt{\log d}}}$.
    This completes the proof.
\end{proof}

We can now prove \pref{thm:main-theorem}.

\begin{proof}[Proof of \pref{thm:main-theorem}]
    We first assume that for all $(i,j)\in E$ with sign $b_{ij}$, the SDP solution satisfies $b_{ij} \iprod{v_i,v_j} \in [\rho_* - 0.01, \rho_* + 0.01]$ where $\rho_* \approx 0.689$.
    Recall that in \pref{alg:max-cut}, for every $i\in S$,
    all edges between $i$ and $B_i$ are violated, and all edges between $i$ and $C_i$ are satisfied.
    Further, since $B_i$ and $C_i$ are disjoint from $S$, the vertices in $B_i$ and $C_i$ will not be flipped.
    For edges between $i$ and $A_i$, in the worst case all of them are violated after flipping.
    Thus, we have a local gain of at least $\Delta_i = (\sum_{j\in B_i} w_{ij} - \sum_{j\in A_i \cup C_i} w_{ij})_+$.

    The expected total gain from the local updates (over the random sample of $g$) is
    \begin{equation*}
        \E[\Delta] = \E \sum_{i \in S} \Delta_i = \sum_{i\in V} \E \Brac{\1(i\in S) \Delta_i} = \sum_{i\in V} \Pr[i\in S] \cdot \E[\Delta_i | i\in S] \geq \Omega(\eps) \sum_{i\in V} \E[\Delta_i | i\in S] \mcom
    \end{equation*}
    where $\Pr[i\in S] = \Omega(\eps)$ is due to \pref{fact:g-in-eps}.
    Then, setting $\eps = \frac{1}{C d \sqrt{\log d}}$ for a large enough constant $C$, by \pref{lem:local-gain} and the fact that the total weight $W = \frac{1}{2}\sum_{i\in V} W_i$, we have
    \begin{equation*}
        \E[\Delta] \geq \Omega(\eps) \cdot \sum_{i\in V}  \Omega\Paren{\frac{W_i}{d\sqrt{\log d}}} = W \cdot \Omega\Paren{\frac{1}{d^2 \log d}} \mper
    \end{equation*}
    Therefore, if the Max-2LIN instance $\varphi$ has optimum $\OPT \leq W$, then in expectation we can find an assignment that satisfies
    \begin{equation*}
        \varphi(x) \geq \alphaGW \cdot \OPT + W\cdot \Omega\Paren{\frac{1}{d^2 \log d}}
        \geq \Paren{\alphaGW + \Omega\Paren{\frac{1}{d^2 \log d}}} \cdot \OPT \mper
    \end{equation*}
    
    From here on, we follow the same argument as \cite{FKP02}.
    Let $\delta = \Omega(\frac{1}{d^2 \log d})$ be the improvement above.
    If more than $\delta/2$ fraction of the (weighted) edges satisfy $b_{ij} \iprod{v_i, v_j} \notin [\rho_* \pm 0.01]$, then hyperplane rounding already gives us $\alphaGW + \Omega(\delta)$ approximation ratio.
    If at most $\delta/2$ fraction of the edges have $b_{ij}\iprod{v_i, v_j} \notin [\rho_* \pm 0.01]$, then we can simply ignore those edges and get an approximation of $(1-\delta/2)(\alphaGW + \delta) \geq \alphaGW + \Omega(\delta)$.
    This completes the proof.
\end{proof}

\section*{Acknowledgement}
We thank Prasad Raghavendra for his  Simons Institute talk\footnote{\url{https://simons.berkeley.edu/talks/title-tba-5}} on approximation algorithms for bounded degree constraint satisfaction problems and various related discussions that directly motivated this work. We thank Simons Institute Berkeley for hosting us in the Fall 2021 research program on Computational Complexity of Statistical Inference.
\bibliographystyle{alpha}
\bibliography{main}

\end{document}